\newcommand{\size}[1]{{\left|#1\right|}}
\newcommand{\pair}[1]{{\left\langle#1\right\rangle}}
\DeclareMathOperator{\poly}{poly}
\title{Short lists for shortest descriptions in short time}
\author{
Jason Teutsch \\
\emph{Penn State University}
}
\begin{abstract}
Is it possible to find a shortest description for a binary string? The well-known answer is ``no, Kolmogorov complexity is not computable.''  Faced with this barrier, one might instead seek a short list of candidates which includes a laconic description.  Remarkably such approximations exist.  This paper presents an efficient algorithm which generates a polynomial-size list containing an optimal description for a given input string.  Along the way, we employ expander graphs and randomness dispersers to obtain an Explicit Online Matching Theorem for bipartite graphs and a refinement of Muchnik's Conditional Complexity Theorem.  Our main result extends recent work by Bauwens, Mahklin, Vereschchagin, and Zimand.
\end{abstract}
\begin{document}

\maketitle

\section{The quest for short descriptions}

We explore an interaction between randomness extraction, combinatorics, and Kolmogorov complexity culminating in an efficient, new approximation for optimal descriptions.  Informally, a computer program $p$ is called a description for a binary string $x$ if the execution of $p$ yields output $x$. The Kolmogorov complexity of a binary string is the length of its shortest description in some standard programming language (see Section~\ref{sec: conventions}).  As much as one might like to know the Kolmogorov complexity of a given string, it is impossible to obtain this quantity effectively \cite{LV08}.  Even estimating Kolmogorov complexity for a given string is infeasible, as no unbounded computable function can be a lower bound for a Kolmogorov complexity \cite[Theorem~1.6]{ZL70}.  Moreover, any algorithm mapping a string to a list of values containing its Kolmogorov complexity must, for all but finitely many lengths~$n$, include in the list for some string of length~$n$ at least a fixed fraction of the lengths below $n+O(1)$~\cite{BBFFGLMST06}.

Remarkably, as recently observed by Bauwens, Makhlin, Vereshchagin, and Zimand \cite{BMVZ12}, the situation differs when we seek a short list of candidate descriptions for a given string.  We will show that it is possible to efficiently compute a polynomial-size list containing a shortest description for any given string, up to an additive constant number of bits (Corollary~\ref{cor: jason}).  The existence of our listing algorithm will follow from a combinatorial graph construction, namely our Explicit Online Matching Theorem, and we devote the remainder of our discussion to establishing this crucial result.

Sections~\ref{sec: conventions} and~\ref{sec: parr} provide background and prior results.  Section~\ref{sec: randomness extraction} discusses the bipartite expander and disperser graphs which we will use to obtain our main theorems in Section~\ref{sec: eom}, and the final section provides additional analysis of the core construction.

\section{Conventions for complexity and bipartite graphs} \label{sec: conventions}
We formalize the notions of ``description'' and ``Kolmogorov complexity'' from the previous section, review the definition of conditional complexity, and then discuss bipartite graphs.  Throughout this manuscript, $\size{x}$ denotes the length of a string $x$, and $\size{S}$ denotes the cardinality of a set $S$.  

For every Turing machine~$M$, we call
\[
C_M(x) = \min \{\size{p} \colon M(p) = x\}
\]
the (plain) \emph{Kolmogorov complexity with respect to~$M$}.  Let \linebreak $M_0, M_1, \dotsc$ be an effective enumeration of all Turing machines.  Let $\pair{\cdot, \cdot}$ denote a polynomial-time computable encoding for pairs of strings whose output has length which is a linear function of the first coordinate's length plus the second coordinate's length, and define a machine $U$ by $U(\pair{e,x}) = M_e(x)$.  This \emph{standard} machine~$U$ has the property that for any further machine $M$, there exists a constant~$d$ such that $C_U(x) \leq C_M(x) + d$ for all strings~$x$, see \cite{LV08} for details.  For the remainder of this manuscript, let $C = C_U$ denote the Kolmogorov complexity of the standard machine~$U$.  We will say that $p$ is a \emph{description} of a binary string $x$ if $U(p) = x$.  When discussing pairs, we may omit the delimiters $\pair{\cdot}$ for readability.

\emph{The conditional complexity of $a$ given $b$}, or $C(a \mid b)$, is the length of the shortest string which translates the string $b$ into the string $a$.  More specifically,
\[
C(a \mid b) = \min \{ \size{p} \colon U(p,b) = a \}.
\]

We will use the following notation for graphs.  Triplets $(L,R,E)$ will denote bipartite graphs in which $L$ is a set of left-hand vertices, $R$ is a set of right-hand vertices, and $E \subseteq L \times R$ is a set of edges connecting these two halves.  For any set of vertices $S$, $E(S)$ denotes the neighbors of $S$, and a bipartite graph has \emph{left degree} $d$ if each of its left-hand vertices has exactly $d$ neighbors in $R$.  

A family of bipartite graphs is called \emph{explicit} if for each member $(L,R,E)$ with left-degree~$d$, the $i^\text{th}$ neighbor of any vertex in~$L$ can be computed in time polynomial in $\pair{\log \size{L}, \log d}$.  When the family context is clear, we simply say that $(L,R,E)$ is itself \emph{explicit}.  Similarly, an infinite bipartite graph whose left-hand vertices consist of binary strings is called \emph{explicit} if the $i^\text{th}$ neighbor of each left-hand vertex $x$ can be computed in time $\poly (\size{x}, \log \size{E(\{x\})})$.

\section{Prior and related results} \label{sec: parr}

Buhrman, Fortnow, Laplante \cite{BFL01} and Muchnik \cite{Muc02} observed that if a hash value determines a string more or less uniquely among a class of strings, then that hash value serves as a description of that string modulo advice.  While the authors of \cite{BFL01} used randomness extraction to obtain bounds\footnote{Bauwens, Makhlin, Vereshchagin, and Zimand  \cite{BMVZ12} have refined the distinguishing complexity result from \cite{BFL01}.} on polynomial-time variants of Kolmogorov complexity, \cite{Muc02} employed probabilistic methods to prove the existence of an expander-like object.  The latter translates into the next statement about conditional complexity.
\begin{namedtheorem*}{Muchnik's Conditional Complexity Theorem}[\citep{Muc02}]
For any strings $x$ and $y$, there exists a string $p$ such that
\begin{enumerate}[\scshape (i)]
\item $C(x \mid p,y) = O(\log \size{x})$,
\item $\size{p} = C(x \mid y)$, and 
\item $C(p \mid x) = O(\log \size{x})$.
\end{enumerate}
The hidden constants do not depend on $x$, $y$, or $p$.
\end{namedtheorem*}

\pagebreak[2]
Muchnik's Theorem has a simple interpretation in the context of multisource information theory \cite{MRS11}.  Suppose Alice wants to send a string $x$ to Bob and that Bob already knows $y$. By definition Alice must send a message $p$ of length at least $C(x \mid y)$ bits to Bob in order to communicate $x$.  Muchnik's Theorem tells us that Alice can construct such a message $p$ without even knowing $y$!   According to the theorem, Alice requires logarithmic advice to encode the message $p$, and Bob then needs just logarithmic advice to transform $p$ back into $x$.

A recent paper \cite{MRS11} by Musatov, Romashchenko, and Shen furnished two combinatorial proofs of Muchnik's Conditional Complexity Theorem.  The first proof, which introduced online matchings (Definition~\ref{defn: olm}), roughly follows Muchnik's original argument \cite{Muc02} whereas the second one appeals to randomness extraction along the lines of Buhrman, Fortnow, and Laplante \cite{BFL01}.  The proof of our main result, Corollary~\ref{cor: jason}, combines both of these methods in its core construction.  Somewhat unexpectedly, we can also understand Muchnik's Conditional Complexity Theorem in terms of lists of descriptions, the main objects of this paper.  We explore the details of this connection in Corollary~\ref{cor: bettermuchnik}.

Building on ideas from \cite{MRS11}, Bauwens, Makhlin, Vereshchagin, and Zimand \cite{BMVZ12} improved the following theorem in two ways. 
\begin{theorem}[Bauwens, Makhlin, Vereshchagin, and Zimand {[\citep{BMVZ12}]}] \label{thm: easylog}
There exists a computable function which maps each binary string $x$ to a $\poly(\size{x})$-size list containing a length $C(x) + O(\log \size{x})$ description for $x$.
\end{theorem}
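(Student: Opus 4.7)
The plan is to use an explicit bipartite graph admitting online matchings to generate candidate descriptions, one batch per possible complexity level. Write $n = \size{x}$ and, for each threshold $k \in \{0, 1, \dotsc, n + O(1)\}$, let $S_k = \{z \colon C(z) \leq k\}$; this set has cardinality at most $2^{k+1}$ and is computably enumerable by dovetailing $U$ over all inputs of length at most $k$. The first step is to fix an explicit bipartite graph $G_k = (L_k, R_k, E_k)$ with $L_k$ the set of all strings, $R_k$ the set of strings of length $k + O(\log n)$, and left degree polynomial in $n$, such that any sequence of at most $2^{k+1}$ left vertices arriving online can be matched to distinct right neighbors. This is exactly the role of the Explicit Online Matching Theorem promised in Section~\ref{sec: eom}, whose existence I would reduce to sufficiently strong explicit bipartite expansion on sets of size up to $2^{k+1}$.

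Second, dovetailing the enumeration of $S_k$ with the online matching on $G_k$, I would assign each newly enumerated $z \in S_k$ an unused neighbor $r_k(z) \in E_k(\{z\})$. Define an auxiliary machine $U'$ that on input $(k, r)$ reruns this enumeration-and-matching procedure and halts with output $z$ as soon as $r = r_k(z)$ occurs. By universality of $U$ there is a fixed index $e$ of length $O(1)$ with $U(\pair{e, k, r}) = U'(k, r)$ wherever the right-hand side is defined, and the pair encoding inflates the total length by at most $O(\log n)$. The list produced by the main algorithm on input $x$ is
\[
L(x) = \bigcup_{k \leq n + O(1)} \{ \pair{e, k, r} \colon r \in E_k(\{x\}) \},
\]
which is totally computable from $x$, has $\poly(n)$ elements, and whose elements have length at most $k + O(\log n)$.

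Correctness is immediate from the matching: when $C(x) = k$, the string $x$ is eventually enumerated into $S_k$ and assigned some $r_k(x) \in E_k(\{x\})$, so $\pair{e, k, r_k(x)}$ lies in $L(x)$ and, by construction of $U'$, is a $U$-description of $x$ of length $C(x) + O(\log n)$. The principal obstacle is the construction of the graph $G_k$ itself: we need a polynomial-degree explicit bipartite graph whose right side has cardinality only a polynomial factor larger than the set of left vertices to be matched, yet which supports a matching that can be carried out against an adversarial online arrival order. I expect this to rest on explicit bipartite expanders with strong expansion on all sets of size up to $2^{k+1}$, built from the disperser machinery flagged in Section~\ref{sec: randomness extraction}.
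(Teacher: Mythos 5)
Your proposal is correct and follows essentially the same route the paper takes: the paper states Theorem~\ref{thm: easylog} as a cited prior result and effectively proves it (in strengthened form) via Corollary~\ref{cor: jason}, whose argument is exactly your scheme of online-matching the enumerated strings of complexity at most $k$ to right-hand vertices of the Explicit Online Matching Theorem graph and inverting the matching with a fixed auxiliary machine. The only substantive difference is that you encode $k$ explicitly in the description, paying the $O(\log\size{x})$ overhead, whereas the paper recovers $k$ from the length of the right-hand vertex and thereby gets the sharper $C(x)+O(1)$ bound.
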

They showed that either one can generate the list in polynomial time, or one can bound the length of the contained description by $C(x) + O(1)$.  In the latter case the length of the list can be quadratic in $\size{x}$, and furthermore no computable function can generate a shorter list for descriptions of this size \cite{BMVZ12}.  The authors also give an improvement of Muchnik's Conditional Complexity Theorem.  They show that the description $p$ in \textsc{(iii)} can be computed \emph{efficiently} from $x$ given $O(\log \size{x})$ bits of advice, and we will improve their result further in Corollary~\ref{cor: bettermuchnik} by showing that the number of advice bits in \textsc{(i)} can be reduced from $O(\log \size{x})$ to $O(1)$.  In a similar vein, Musatov and Romashchenko \cite{Mus11, MRS11} investigated Muchnik's Theorem in the context of space complexity.

In this paper, we shall show that both the Theorem~\ref{thm: easylog} improvements from \cite{BMVZ12} can be achieved simultaneously.  We will efficiently generate a polynomial-length list containing a description whose length is within an additive constant of optimal (Corollary~\ref{cor: jason}).  We do not investigate running time of descriptions here, however time complexity for these objects remains a relevant consideration.

\section{Randomness extraction tools} \label{sec: randomness extraction}

Our main construction combines disperser and expander graphs which we derive from an explicit graph object of Ta-Shma, Umans, and Zuckerman~\cite{TUZ07}.

\begin{definition}
A bipartite graph $(L,R,E)$ is called a $(K,\epsilon)$-\emph{disperser} if every subset of $L$ with cardinality at least $K$ has at least $(1-\epsilon)\size{R}$ distinct neighbors in $R$.
\end{definition}

\begin{definition} \label{defn: expander}
A bipartite graph $(L,R,E)$ is called a $(K,c)$-\emph{expander} if for every set $S \subseteq L$ of size at most $K$, $\size{E(S)} \geq c\size{S}$.
\end{definition}

Both expanders and dispersers are bipartite graphs whose left-hand vertices have many neighbors on the right side, however they differ in two important respects.  First, expander graphs achieve expansion only for sufficiently small sets whereas disperser graphs only guarantee dispersion for large sets.  Secondly, while both expanders and dispersers involve graphs with small left degree whose left-hand subsets have many neighbors, they differ in their means of achieving these parameters.  In a bipartite expander graph, subsets of left-hand vertices have many right-hand neighbors relative to their size, whereas in a disperser graph the neighbors of the left-hand subset cover a large fraction of the entire right-hand side.

\begin{theorem}[Ta-Shma, Umans, Zuckerman {[\citep{TUZ07}]}] \label{thm: tuz}
There is a constant $c>0$ such that for every $K \geq 0$, $\epsilon > 0$, and any nonempty set of vertices $L$, there exists an explicit $(K,\epsilon)$-disperser $(L,R,E)$ with left degree $d=\poly \log\size{L}$ and 
\[
\frac{cKd}{\log ^3\size{L}} \leq \size{R} \leq Kd.
\]
\end{theorem}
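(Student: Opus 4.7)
The statement comes from Ta-Shma, Umans, and Zuckerman, so my plan is to sketch the high-level structure of such a construction rather than reproduce their specific machinery in full. The key reduction is that every good seeded extractor automatically yields a disperser with essentially the same parameters, so the core task becomes constructing an explicit seeded extractor $\mathrm{Ext} \colon L \times [d] \to R$ with both short seed (so that $d$ is $\poly\log\size{L}$ for constant $\epsilon$) and very small entropy loss (so that $\log(Kd) - \log\size{R}$ is $O(\log\log\size{L})$).

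The first main step is to build the extractor, itself in two substeps. Substep one is a lossless condenser derived from an algebraic code: treat each element of $L$ as a low-degree polynomial over a suitable finite field and output its evaluation together with a few of its low-degree powers at a point indexed by the seed, in the style of Parvaresh--Vardy and Guruswami--Umans--Vadhan. Because the code has large distance, this map is almost injective on every set of bounded min-entropy, so it preserves entropy with negligible loss. Substep two composes the condenser with an off-the-shelf optimal-entropy-loss extractor, fed by a fresh short seed and glued via a merger argument; the resulting extractor has total seed length $O(\log\log\size{L} + \log(1/\epsilon))$ and entropy loss $O(\log\log\size{L})$, matching the parameters the theorem requires. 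Explicitness propagates at each substep because all the relevant operations reduce to polynomial-time arithmetic in the underlying finite fields.

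The second main step is to pass from the extractor to a disperser. If $S \subseteq L$ has $\size{S} \geq K$, then the distribution on $R$ induced by choosing a uniform $x \in S$ and a uniform seed is $\epsilon$-close to uniform on $R$. Any set $T \subseteq R$ with $\size{T} < (1-\epsilon)\size{R}$ then receives strictly less than full output mass, so the image $E(S)$ must intersect $R \setminus T$; taking $T = R \setminus E(S)$ forces $\size{E(S)} \geq (1-\epsilon)\size{R}$. The upper bound $\size{R} \leq Kd$ is immediate from the edge count, and the lower bound $\size{R} \geq cKd / \log^3\size{L}$ is precisely the entropy-loss guarantee of the extractor.

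The main obstacle is the first substep of the construction: achieving seed length $O(\log\log\size{L})$ and entropy loss $O(\log\log\size{L})$ simultaneously. Generic extractors obtained by probabilistic arguments attain these parameters but are not explicit, while direct code-based constructions tend to lose either a $\log(1/\epsilon)$ factor in the seed or a $\Theta(\log\size{L})$ factor in the entropy. Overcoming this double tightness is the technical heart of the Ta-Shma--Umans--Zuckerman result, and it requires a recursive composition in which a slightly suboptimal condenser is bootstrapped into a near-optimal one by feeding its own output back as the source for a smaller instance of the same problem.
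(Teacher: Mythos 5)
The first thing to say is that the paper does not prove Theorem~\ref{thm: tuz} at all: it is imported verbatim from Ta-Shma, Umans, and Zuckerman \cite{TUZ07} and used as a black box, so there is no in-paper argument to measure your sketch against. Judged on its own terms, the parts of your proposal that are actually carried out are fine: the passage from an extractor to a disperser is correct (if the output of the extractor on the flat source over $S$, $\size{S}\geq K$, is $\epsilon$-close to uniform, then testing against $T=E(S)$ gives $1-\size{E(S)}/\size{R}\leq\epsilon$), and reading the lower bound on $\size{R}$ as an entropy-loss guarantee is the right interpretation. (Small slip: edge counting only yields $(1-\epsilon)\size{R}\leq Kd$, hence $\size{R}\leq Kd/(1-\epsilon)$; the cleaner bound $\size{R}\leq Kd$ is a normalization of the construction --- the output length never exceeds the source min-entropy plus the seed length --- not a consequence of the disperser property.)

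The genuine gap is that the entire content of the theorem --- an explicit construction achieving degree $\poly\log\size{L}$ \emph{and} entropy loss $O(\log\log\size{L})$ simultaneously --- is the step you name but do not carry out, and your final paragraph concedes as much: a proposal whose closing sentence identifies ``the technical heart'' as something still to be overcome has deferred the proof rather than given one. There is also a misattribution that matters here. Condensers built from Parvaresh--Vardy codes are the later construction of Guruswami, Umans, and Vadhan \cite{GUV09}, not the machinery of \cite{TUZ07}, whose Lemmas 6.4--6.5 obtain the disperser by composing two extractors $E_1,E_2$ constructed by other means. The substitution you describe is, amusingly, exactly what Section~\ref{sec: analysis} of this paper performs --- replacing a component of \cite[Lemma~6.4]{TUZ07} with the extractor of \cite[Lemma~4.21]{GUV09} to drive the degree down to $O(n^{2+\delta})$ --- so your sketch is closer to a roadmap for that optimized variant than to a proof of the cited theorem itself.
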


Ta-Shma, Umans, and Zuckerman's construction aims to make the set of right-hand vertices as large as possible, however our purposes require a small right-hand set.

\begin{namedtheorem*}{Disperser Lemma}[Zimand] \label{lem: marius}
For every~$k \geq 0$ and any set~$L$ with $\size{L} \geq 2^k$, there is an explicit bipartite graph $(L,R,E)$
\begin{itemize}
\item with $\size{R} = 2^{k+1}$,
\item whose left degree is polynomial in $\log \size{L}$ and does not depend on $k$, and such that
\end{itemize}
any subset of $L$ of size at least $2^k$ has at least $2^k$ neighbors in~$R$.
\end{namedtheorem*}

\begin{proof}
Let $k \geq 0$ and $L$ be a set of size at least $2^k$.  Apply Theorem~\ref{thm: tuz} with $L$, $K = 2^k$ and $\epsilon = 1/3$, and call the resulting graph~$G = (L,R,E)$.  $G$ is close to what we need, however the right-hand size vertex set could be either too large or too small.  If $ \size{R} = 2^{k+1}$, in which case we say that $R$ has the \emph{right size}, then $G$ is already what we need as any subset of $L$ of size at least $2^k$ has at least $(2/3) 2^{k+1} > 2^k$ neighbors in $R$.

Consider the case where $R$ is too small, meaning less than twice the right size (but not already equal to it).  In this step we will overshoot the size of $R$ by a bit, and then we correct for this in the next paragraph.  We increase the size of the right-hand set by merging cloned copies of $G$.  Form a new graph which has the same left-hand vertices as~$G$, whose right-hand vertices are a disjoint union of $R$ with itself, and whose edges are the same as the ones for $G$ in each half.  This operation doubles both the size of the right-hand vertex set and the degree of the graph while maintaining the disperser parameters $K = 2^k$ and $\epsilon = 1/3$.  We iterate this operation until the right-hand vertex set becomes at least twice the right size.  The resulting graph preserves the disperser parameters, and the degree is still $\poly \log \size{L}$ as Theorem~\ref{thm: tuz} provided us with a graph whose right-hand cardinality was already no less than $O(1/ \log^3 \size{L})$ times the right size.

Without loss of generality, assume that $R$ is at least twice the right size.  We divide $R$ into $2^{k+1}$ equivalence classes of approximately equal size, and call this collection of classes $R'$.  Specifically, we distribute the vertices of $R$ evenly among the classes so that no class is bigger than any other by more than one member.  Now define a bipartite graph $G'$ with left vertex set $L$, right vertex set $R'$, and where $x \in L$ is a neighbor of $y \in R'$ iff $(x,z) \in E$ for some $z$ in the equivalence class of $y$.  We claim that $G'$ has the desired properties for the lemma.  $R'$ is already the right size, and the folding operation just described does not increase the left degree, so it remains to verify the disperser property.  Note that each equivalence class must contain at least two vertices since $R$ is at least twice the right size.  Let $t\geq 2$ be the unique integer so that $t \leq \size{R}/2^{k+1} < t+1$, and let $S$ be a subset of $L$ of size at least $2^k$.  Without loss of generality, we can assume that $\size{E(S)}$ is exactly equal to $(1-\epsilon) \size{R}$ because if $S$ had more neighbors we would get even better parameters for the disperser.  Thus the number of equivalence classes in $R'$ which contain no neighbor of $S$ is greater than $\epsilon \size{R} / (t+1)$ and at most $\epsilon \size{R} / t$.  Indeed
\[
\epsilon 2^{k+1} = \frac{\epsilon \size{R}}{\size{R} / 2^{k+1}}
\]
lies between these two values.  The ratio between these two endpoints is $(t+1)/t \leq 3/2$, and it follows that the number of elements in $R'$ which have no neighbor in $S$ is at most (in fact strictly less than) $(3/2) \epsilon 2^{k+1} = (1/2) 2^{k+1}$, so $S$ has at least $2^{k+1} - 2^k$ neighbors in $R'$.
\end{proof}

We modify the above construction to obtain an expander graph.

\begin{namedtheorem*}{Expander Lemma}
For every~$k \geq 0$ and any set~$L$ with $\size{L} \geq 2^k$, there is an explicit bipartite graph $(L,R,E)$
\begin{itemize}
\item where $\size{R} < 2^{k+3}$,
\item whose left degree is polynomial in $\log \size{L}$ and does not depend on $k$, and such that
\end{itemize}
any subset $S \subseteq L$ of size at most $2^k$ has at least $\size{S}$ neighbors in~$R$.
\end{namedtheorem*}

\begin{proof}
Assume $k \geq 0$, and let $L$ be a set satisfying $\size{L} \geq 2^k$.  First we construct for each $0 \leq i \leq k$ a disperser graph like the one in the previous lemma but with slightly different parameters.  We want a bipartite graph $G_i= (L,R_i,E_i)$ whose left degree is polynomial in $\log \size{L}$, where $\size{R_i}= 2^{i+2}$, and such that any subset of $L$ of size at least $2^i$ has at least $2^{i+1}$ neighbors in $R_i$.  The same argument from the previous lemma gets us a graph with these parameters when we alter the ``right size'' and number of equivalence classes to be $2^{i+2}$.

We now transform this collection of disperser graphs into an expander graph.  Let $G = (L,R,E)$ be a merge of all these disperser graphs.  That is, $R$ is the disjoint union of $R_0, R_1, \dotsc, R_k$, and $E$ is the corresponding union of the $E_i$'s.  The left degree of $G$ is at most $k$ times the maximum left degree of all $G_i$'s, which is $\poly \log \size{L}$, and
\[
\size{R} = \sum_{i=0}^k \size{R_i} = \sum_{i=0}^k 2^{i+2} = 2^{k+3}-4.
\]
Consider any $S \subseteq L$ of size at most $2^k$, and let $i$ be the unique integer such that $2^i \leq \size{S} < 2^{i+1}$.  Then
\[
\size{E(S)} \geq \size{E_i(S)} \geq 2^{i+1} > \size{S}. \qed
\]
\end{proof}

\section{Explicit online matching} \label{sec: eom}

We now present the main theorem.  The core of our constructions is the following ``static'' disperser graph which we transform into a further bipartite graph that admits ``online'' matching.  In case one does not require an explicit graph, a bipartite graph with randomly chosen edges achieves the other properties of Lemma~\ref{lem: fullk7hall} with nonzero probability\footnote{See also \cite{MRS11} for an example of a similar but finite graph constructed using the probabilistic method.} \cite{BMVZ12}. 

\begin{lemma} \label{lem: fullk7hall}
For every $k \geq 0$, there exists an explicit bipartite graph $(L,R,E)$ such that
\begin{itemize}
\item $L$ consist of all binary strings of length at least~$k$,
\item the cardinality of $R$ is $2^{k+1}$,
\item the degree of each vertex $x \in L$, is $\poly(\size{x})$, and
\item where any subset of $L$ of size at least $2^k$ has at least $2^k$ neighbors in $R$.
\end{itemize} 
The polynomial $\poly(\size{x})$ does not depend on~$k$.
\end{lemma}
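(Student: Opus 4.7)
The plan is to construct $G$ by assembling the Disperser Lemma's output across all string lengths $n \geq k$ and then identifying the resulting right-hand sides with a single $R$ of size $2^{k+1}$. Concretely, for each $n \geq k$ I apply the Disperser Lemma to $L_n = \{0,1\}^n$ with threshold $2^k$, obtaining an explicit graph $G_n = (L_n, R_n, E_n)$ in which $\size{R_n} = 2^{k+1}$, the left degree is polynomial in $n$ and independent of $k$, and every subset of $L_n$ of size at least $2^k$ has at least $2^k$ neighbors in $R_n$. I then identify every $R_n$ with a common right-hand vertex set $R$ and take $E = \bigcup_n E_n$.

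All structural conditions the lemma demands follow directly from the corresponding properties of the individual $G_n$: $\size{R} = 2^{k+1}$, each vertex $x \in L$ has degree $\poly(\size{x})$, and the $i^{\text{th}}$ neighbor of $x$ is computed in time $\poly(\size{x}, \log i)$ by invoking the Disperser Lemma at length $\size{x}$. The remaining content is the global dispersion requirement: any $S \subseteq L$ with $\size{S} \geq 2^k$ should satisfy $\size{E(S)} \geq 2^k$. The straightforward case is when some length $n$ satisfies $\size{S \cap L_n} \geq 2^k$, in which case the Disperser Lemma applied to $G_n$ immediately yields $\size{E(S)} \geq \size{E_n(S \cap L_n)} \geq 2^k$.

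The main obstacle is the \emph{spread} case, where $\size{S \cap L_n} < 2^k$ at every length $n$ but $\size{S} \geq 2^k$; here the per-length disperser property alone does not bound $\size{E(S)}$, because singletons and small intersections at each length can in principle cluster onto only a handful of right-hand vertices. To surmount this I plan to choose the identifications $R_n \leftrightarrow R$ carefully --- for instance, using an additional explicit family of permutations obtained either from a further invocation of Ta-Shma, Umans, Zuckerman's construction or from an explicit pairwise-independent hash family --- so that the neighborhoods $E_n(S \cap L_n)$ arising from distinct length classes are sent to essentially uncorrelated regions of $R$. A counting argument on the complement of $E(S)$ in $R$ should then force $\size{E(S)} \geq 2^k$, completing the proof.
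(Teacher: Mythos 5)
There is a genuine gap, and it is exactly at the point you flag: the ``spread'' case cannot be repaired by any choice of identifications of the $R_n$'s with $R$. The Disperser Lemma gives no information about subsets of $L_n$ of size below the threshold $2^k$, so it is consistent with all of its guarantees that some $T_n \subseteq L_n$ with $\size{T_n} = 2^k - 1$ has $\size{E_n(T_n)} \leq \poly(n)$ (for instance, all of $T_n$ sharing the same $d$ neighbors). Taking $S = T_{n_1} \cup T_{n_2}$ for two lengths gives $\size{S} \geq 2^k$ while $\sum_n \size{E_n(S \cap L_n)} \leq 2\poly(n) \ll 2^k$; since identifying right-hand sides can only merge vertices and shrink neighborhoods, no permutation or hash family applied downstream can make $\size{E(S)} \geq 2^k$. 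The deficit is created at the first level, before any identification happens, so the proposed counting argument has nothing to count.

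The fix the paper uses is a two-stage composition that matches each tool to the regime where it actually works. First apply an \emph{expander} (not a disperser) at each length $k \leq n \leq 2^k$, mapping $L_n$ into a middle set $M_n$ of size $2^{k+3}$ so that every $S \cap L_n$ of size at most $2^k$ has at least $\size{S \cap L_n}$ neighbors; because the $M_n$ are kept \emph{disjoint} rather than identified, these lower bounds add up to $\size{A(S)} \geq \size{S} \geq 2^k$. Only then, once the image set is guaranteed to be large, is the Disperser Lemma applied once to the finite set $M = \bigsqcup_n M_n$ (of size less than $2^k \cdot 2^{k+3}$, so its left degree stays $\poly(k)$) to compress down to $R$ of size $2^{k+1}$ while preserving $\size{E(S)} \geq 2^k$. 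Strings of length greater than $2^k$ are handled separately (they can simply be given $2^k$ explicit neighbors, which is polynomial in their length), which is what keeps $M$ finite. Your instinct that a single disperser per length does not suffice was right; the missing idea is that small sets need expansion, large sets need dispersion, and disjointness of the intermediate sets is what lets the per-length contributions accumulate.
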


\begin{proof}
Our construction proceeds in two phases.  First, we use the Expander~Lemma to spread the neighbors of the left-hand nodes $L$ across a small middle vertex set~$M$.  Next we take this spread of neighbors $M$ and map it to an even smaller set, namely the right-hand vertices~$R$, via the Disperser~Lemma.  The edges $E$ of our desired graph will consist of those pairs in $L$ and $R$ which are connected by the composition of these two mappings.

We now discuss how to handle strings of different lengths.  Each string in $L$ of length greater than $2^k$ will have $2^k$ neighbors in $R$ (which is polynomially many).  For each length~$n$ in the remaining range $k \leq n \leq 2^k$, we create an explicit bipartite expander between strings of length~$n$ and an intermediate set $M_n$ and then disperse the disjoint union of the $M_n$'s into the set $R$.

\begin{figure}
\caption{\textbf{Figure.} The Expander~Lemma composed with the Disperser~Lemma.}
\begin{center}
\includegraphics[height=7.25cm]{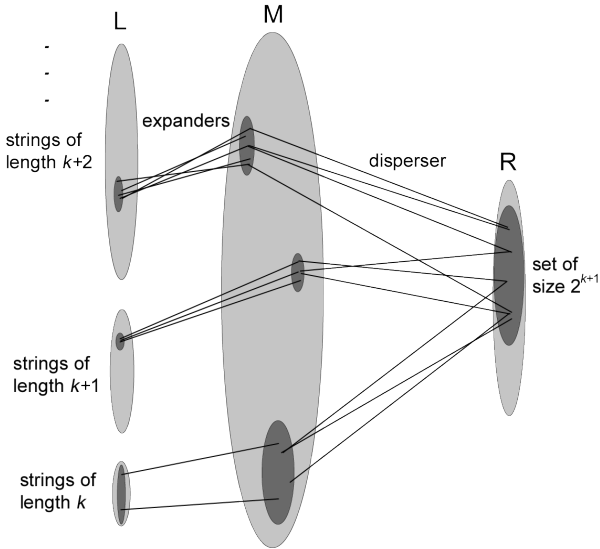}
\end{center}
\end{figure}

Let $L$ be as in the assumption of this lemma, let $k \geq 0$, and let $L_n$ denote the strings of length~$n$.  We generate an explicit expander graph $G_n = (L_n, M_n, A_n)$ for each string length $k \leq n \leq 2^k$. Apply the Expander Lemma to obtain a bipartite graph $G_n$ with left degree $\poly (n)$ and right vertex cardinality $2^{k+3}$ which is a $(2^k, 1)$-expander.  Let $M$ be the disjoint union of $M_k, M_{k+1}, \dotsc, M_{2^k}$, and let $A$ denote the corresponding edges of this embedding.  That is,
\[
A = \{(x,y) \colon \text{$x \in L$, $y \in M$ and $(x,y) \in A_n$ for some $n$}\}.
\]
Now $(L, M, A)$ is a bipartite graph where the left degree of each length~$n$ string is $\poly(n)$, and
\[
2^k < \size{M} = \sum_{n=k}^{2^k} \size{M_n}
< 2^k \cdot 2^{k+3}.
\]

By the Disperser~Lemma, there exists an explicit bipartite graph $(M, R, B)$ whose left degree, $\poly(\log \size{M})$, is polynomial in $k$, whose right vertex set satisfies $\size{R} = 2^{k+1}$, and such that any subset of~$M$ of size at least $2^k$ has at least $2^k$ neighbors in $R$.  Finally, let $E$ consist of all edges between $L$ and $R$ which are connected through $M$ by composition of the edge sets $A$ and $B$.  That is,
\begin{multline*}
E = \{ (x,y) \colon \text{$x \in L$, $y \in R$, and there exists $z \in M$} \\ \text{such that $(x,z) \in A$ and $(z,y) \in B$}\}.
\end{multline*}
This concludes the construction of the bipartite graph $(L,R,E)$, whose left degree for each string of length $n$ is no more than 
\[
\max\{O(n), \poly(n) \cdot \poly(\log \size {M})\} = \poly (n),
\]
regardless of whether or not $n > 2^k$.

Let $S$ be a subset $L$ of size at least $ 2^k$.  If $S$ contains a string of length greater than $2^k$ then that string has $2^k$ neighbors in $R$, so in this case we immediately satisfy $\size{E(S)} \geq 2^k$.  Thus we may assume that all strings in $S$ have length at most $2^k$.  It follows from the expansion property of $(L,M,A)$ that
\[
\size{A(S)} = \sum_{n=k}^{2^k} \size{A_n(\{x \in S \colon \size{x} = n \})} \geq 2^k,
\]
and from the disperser property of $(M,R,B)$ we then get
\[
\size{E(S)} = \size{B[A(S)]} \geq 2^k
\]
as desired.
\end{proof}

The next definition and the argument in the next proof are due to Musatov, Romashchenko, and Shen \cite{MRS11}, but we include them here for clarity and completeness. 

\begin{definition} \label{defn: olm}
We say that a bipartite graph $(L,R,E)$ \emph{admits online matchings up to size $s$} if there exists an algorithm such that for any set of vertices in $L$ of size~$s$, whose vertices are (adversarially) presented to the algorithm one at a time, the algorithm can assign each vertex in order received to one of its neighbors (without knowing what comes next), and the overall assignment after all $\leq s$ elements is a bijection.  An online matching is \emph{efficient} if a neighbor can be selected in time linear in the logarithm of the degree of the input.
\end{definition}

We will use the following combinatorial theorem to obtain our main result about Kolmogorov complexity.

\begin{namedtheorem*}{Explicit Online Matching Theorem}
For every $k\geq 0$, there exists an explicit bipartite graph $G = (L,R,E)$ such that
\begin{itemize}
\item $L$ consist of all binary strings of length at least~$k$,
\item the cardinality of $R$ is less than $2^{k+1}$,
\item the degree of each vertex $x \in L$ is $\poly(\size{x})$, and
\item $G$ admits efficient online matching up to size $2^k$.
\end{itemize}  
The polynomial $\poly(\size{x})$ does not depend on~$k$.
\end{namedtheorem*}

\begin{proof}
Let $L$ be as in the hypothesis, and let $k \geq 0$.  Apply Lemma~\ref{lem: fullk7hall} to obtain, for each integer $0 \leq i < k$, a bipartite graph $(L, R_i, E_i)$ where $R_i$ is a set of vertices with cardinality~$2^{i+1}$, the left degree is polynomial in the string length, and any subset of $L$ of size at least $2^i$ has at least $2^i$ neighbors in $R_i$.  Furthermore, let $(L, R_{-1}, E_{-1})$ be a bipartite graph which has a single right-hand vertex which is a neighbor of each element in $L$.  We build the $R_i$'s pairwise disjoint.

Let $R = \bigcup_{i \geq -1} R_i$ and $E = \bigcup_{i \geq -1} E_i$.  We claim that the explicit bipartite graph $(L,R,E)$ has the required properties to establish the theorem.  Since the degree of each vertex in $x \in L$ is the sum of the degrees for $x$ over all $R_i$'s, we see that the degree of each length~$n$ string in $L$ is at most $(k+1) \cdot \poly(n)$, which is still polynomial in $n$.  Furthermore,
\[
\size{R} = \sum_{i={-1}}^{k-1} \size{R_i} = \sum_{i=0}^k 2^i = 2^{k+1}-1.
\]

It remains to verify the efficient online matching property.  We apply a greedy algorithm.  Originally, all vertices in $R$ are marked as unused.  When a vertex $x \in L$ comes in, we assign it to an arbitrary unused neighbor $R_{k-1}$, if such a neighbor exists.  If not, we attempt to assign $x$ to an unused neighbor in $R_{k-2}$.  If this is not possible, we try for an unused neighbor in $R_{k-3}$, etc.  When (and if) $x$ gets assigned to a particular $y \in R_i$, we mark $y$ as used and wait for the next vertex in $L$ to arrive.  Successive arrivals are handled similarly.

We claim that every $x$ gets assigned through this method, and since each $x$ is assigned to an unused vertex, the resulting matching will be a bijection.  We argue by induction that no more than $2^{i}$ vertices in $L$ may fail assignment at level $R_i$ for any $i \geq 0$.  Suppose this bound were exceeded at level $R_{k-1}$, and let $X \subseteq L$ denote those vertices which failed assignment at this level.  Then $\size{X} > 2^{k-1}$, and so by the disperser property $\size{E_{k-1}(X)} \geq 2^{k-1}$.  Each element of $E_{k-1}(X)$ must be used, otherwise we could have matched an element of $X$ to it.  Therefore the total number of vertices which were either matched or failed assignment at level $R_{k-1}$ exceeds $2^k$, a contradiction.

Thus at most $2^{k-1}$ vertices fail assignment at level $R_{k-1}$, and of these at most half fail assignment at level $R_{k-2}$, and so by induction, at most $2^0$ vertices have failed assignment at all levels $R_i$ for $i \geq 0$.  The remaining vertex in $R_{-1}$ is used to assist with the recalcitrant vertex, if needed.  Hence all vertices are matched.
\end{proof}

\begin{remark}[Makhlin]
The cardinality of $R$ in the Explicit Online Matching Theorem cannot be reduced to $2^k$, as then the online matching property would force each right-hand vertex of length~$n$ to have more than $2^n - 2^k$ neighbors, which would imply that the graph restricted to strings of length~$n$ has more than $2^{k}(2^n - 2^k)$ edges, and therefore some left-hand vertex of length~$n$ has degree greater than $2^k(1 - 2^k / 2^n)$.  It follows that for every~$k$ there exists a string in $L$ of length $n= k+1$ whose degree exceeds $2^{k-1}$.

On the other hand, for any $\delta>0$ we can achieve $\size{R} < 2^{k+\delta}$.  If we modify $\epsilon$ in the Disperser~Lemma to be $(2/3)(1 - 2^{-\delta})$ rather than $\epsilon = 1/3$ and fold into $2^{k+\delta}$ equivalence classes rather than $2^{k+1}$ of them, we obtain a disperser graph with right-hand size $2^{k+\delta}$ (and other parameters the same).  This smaller right-hand size then carries over to Lemma~\ref{lem: fullk7hall} and the Explicit Online Matching Theorem.
\end{remark}

We now formalize the connection between online matching and short lists.

\pagebreak

\begin{corollary} \label{cor: jason}
There exists a polynomial-time computable function which maps each binary string $x$ to a $\poly(\size{x})$-size list containing a length $C(x) + O(1)$ description for $x$.
\end{corollary}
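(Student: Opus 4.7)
The plan is to convert the Explicit Online Matching Theorem into a list-generation procedure by designing a universal-like decoder whose inputs are right-hand vertices. For each $k \geq 0$, let $G_{k+1}$ denote the graph furnished by the Explicit Online Matching Theorem with parameter $k+1$, so that $L_{k+1}$ consists of all binary strings of length at least $k+1$, $\size{R_{k+1}} < 2^{k+2}$, and $G_{k+1}$ admits an efficient online matching up to size $2^{k+1}$. I would fix a canonical binary encoding of each right-hand vertex of $G_{k+1}$ as a string of length $k+2$, so that the length of an encoded vertex by itself reveals~$k$.

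Next I would introduce a Turing machine $M^*$ which, on input $r$ of length $m \geq 2$, sets $k = m-2$ and dovetails simulations of $U$ on all programs of length at most~$k$. Whenever a new distinct output $y$ of length at least $k+1$ emerges, $M^*$ feeds $y$ into the efficient online-matching algorithm for $G_{k+1}$ in the enumeration order, obtaining a right-hand assignment for $y$. As soon as some $y$ is assigned to the vertex represented by $r$, $M^*$ outputs $y$ and halts. Because there are at most $2^{k+1}-1$ distinct strings of $C$-complexity at most $k$, the online matching's capacity of $2^{k+1}$ suffices to process all of them. Writing $e^*$ for the index of $M^*$ in the standard enumeration, the string $\pair{e^*, r}$ is a description of $M^*(r)$ via $U$ whose length is $\size{r} + O(1)$.

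On input $x$ of length $n$, the list-generating algorithm outputs (a)~the trivial self-description $\pair{e_{\mathrm{id}}, x}$ of length $n + O(1)$, where $e_{\mathrm{id}}$ is the index of the identity machine, and (b)~for each $0 \leq k \leq n$ and each right-hand neighbor $r$ of $x$ in $G_{k+1}$, the string $\pair{e^*, r}$. Explicitness of $G_{k+1}$ guarantees that the $\poly(n)$ neighbors of $x$ are computable in polynomial time, so the total list has size $\poly(n)$ and is produced in polynomial time.

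To verify correctness, set $k = C(x)$. If $n \leq k$, part~(a) already provides a description of length $n + O(1) \leq C(x) + O(1)$. Otherwise $x \in L_{k+1}$, and since at most $2^{k+1}-1$ distinct strings have $C$-complexity at most $k$, the online-matching algorithm for $G_{k+1}$ assigns $x$ to some right-hand neighbor $r^* \in R_{k+1}$; the corresponding list entry $\pair{e^*, r^*}$ has length at most $(k+2) + O(1) = C(x) + O(1)$ and satisfies $U(\pair{e^*, r^*}) = x$. The principal subtlety is that the list generator cannot afford to run the dovetailed enumeration that would determine $r^*$, but it need not: by listing \emph{every} right-hand neighbor of $x$ in $G_{k+1}$, it is guaranteed to include whichever $r^*$ the matching eventually picks.
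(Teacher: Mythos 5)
Your proposal is correct and follows essentially the same route as the paper's proof: an auxiliary machine dovetails the universal machine over short programs, feeds the resulting outputs into the online-matching algorithm of the appropriate EOMT graph so that right-hand vertices become $C(x)+O(1)$-length names, and the polynomial-time list generator simply outputs all right-hand neighbors of $x$ (plus the identity-machine description to cover $C(x) \geq \size{x}$), relying on explicitness for efficiency. The only differences are cosmetic index shifts (parameter $k+1$ with all programs of length at most $k$, versus the paper's parameter $k$ with programs of length exactly $k$), which affect only the additive constant.
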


\begin{proof}
Define a (not necessarily polynomial-time) machine $M$ which does the following.  At first, every right-hand vertices in each $k$-parameter Explicit Online Matching Theorem (EOMT) graph is designated as ``unused.''  Dovetail on all programs $p$ for the standard machine $U$, and as each one converges, apply the EOMT with $k=\size{p}$ to match the value $U(p)$, if it has not been matched already, with an unused right-hand vertex $z$.  If $U(p)$ already had a match in the $k$-parameter EOMT graph, then do nothing.  Otherwise set $M(z) = U(p)$, and mark $z$ as ``used.''

The process just described attempts to match no more than $2^k$ strings on the $k$-parameter EOMT graph because there are only $2^k$ many binary strings of length~$k$, and each of these attempts succeeds because the $k$-parameter EOMT graph admits online matchings up to size $2^k$.  Since the $\size{p}$-parameter EOMT graph has less than $2^{\size{p}+1}$ right-hand vertices, we may interpret $z$ as a string of length $\size{p} + 1$.  Thus whenever $p$ is a description for some string $x$, $x$ is eventually matched to a string $z$ of length $\size{p} + 1$ which is a neighbor of $x$ in the $\size{p}$-parameter EOMT graph and satisfies $M(z) = x$.  In particular, this holds when $p$ is a shortest description for $x$.

Now $M = M_e$ for some index~$e$, hence for all strings~$x$ with $C(x) \leq \size{x}$ the efficiently computable set
\begin{multline*}
\{\pair{e,y} : \text{$y$ is a neighbor of $x$}\\ \text{in some EOMT graph with parameter $k \leq \size{x}$}\}
\end{multline*}
contains a description $\pair{e,z}$ for $x$ with $\size{z} \leq C(x) + 1$.  In order to cover the case $C(x) > \size{x}$, we add an extra description to this set, namely $\pair{i,x}$, where $M_i$ is the identity map on all strings.
\end{proof}

One might wonder whether the additive $O(1)$ error term is necessary.  The answer depends on the underlying standard machine, as shown in \cite{BMVZ12}.  While there exists a standard machine for which Corollary~\ref{cor: jason} holds with $O(1)$ equal to zero, there are also standard machines where having $O(1)$ equal to zero forces the list size to become exponential in $\size{x}$.

As a further corollary, we improve Muchnik's Conditional Complexity Theorem not only by making the description $p$ polynomial-time computable but also by reducing the decoding overhead from $O(\log n)$ bits to a constant.  We define \emph{time-bounded conditional complexity} $C^t$ for a string $a$ relative to a string $b$ as follows:
\[
C^t (a \mid b) = \min \{ \size{p} \colon \text{$U(p,b)$ converges to $a$ in at most $t$ steps}\}.
\]
\begin{corollary} \label{cor: bettermuchnik}
For any strings $x$ and $y$, there exists a string $p$ such that
\begin{enumerate}[\scshape (i)]
\item $C(x \mid p,y) = O(1)$,
\item $\size{p} = C (x \mid y)$, and 
\item $C^{\poly(\size{x})}(p \mid x) = O(\log \size{x})$.
\end{enumerate}
The hidden constants do not depend on $x$, $y$, or $p$.
\end{corollary}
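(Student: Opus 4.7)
The plan is to relativize the construction of Corollary~\ref{cor: jason} by threading the auxiliary string $y$ through a two-argument decoder machine. Define $M(p, y)$ to hold $y$ fixed and dovetail over all programs $q$, simulating $U(q, y)$; whenever $U(q, y)$ converges to a string $x$ not yet matched on the Explicit Online Matching Theorem (EOMT) graph with parameter $k = |q|$, apply the efficient online matching algorithm to assign $x$ to an unused right-hand vertex $z \in R_k$ and record $M(z, y) = x$. Each EOMT graph with parameter $k$ sees at most $2^k$ distinct $x$'s (one per length-$k$ program $q$ that converges under $U(\cdot, y)$), which is precisely the regime in which the online matching guarantees success.

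For any $x, y$ with $k = C(x \mid y)$, some length-$k$ program $q$ satisfying $U(q, y) = x$ converges during the dovetail, and $x$ gets matched to some $z_x \in R_k$. Setting $p = z_x$, property \textsc{(i)} is immediate: since $M(p, y) = x$ is computed by a fixed machine, we have $C(x \mid p, y) \leq C_M(x \mid p, y) + O(1) = O(1)$. For property \textsc{(iii)}, observe that $z_x$ lies among the $\poly(|x|)$ EOMT-neighbors of $x$ in the graphs with parameter $k \leq |x|$. Because the EOMT is explicit, this candidate list $L(x)$ can be enumerated in time polynomial in $|x|$ from $x$ alone (crucially, without knowing $y$), and only $O(\log |x|)$ advice bits are needed to specify the index of $p$ within $L(x)$, yielding $C^{\poly(|x|)}(p \mid x) = O(\log |x|)$.

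The main obstacle will be property \textsc{(ii)}. The basic EOMT has $|R_k| < 2^{k+1}$, so the construction naively delivers $|p| \leq C(x \mid y) + 1$ rather than equality. I plan to close this gap by invoking the Makhlin refinement from the preceding Remark, which permits $|R_k| < 2^{k + \delta}$ for any chosen $\delta > 0$; combined with the observation that at most $2^k$ distinct strings $x$ can have $C(x \mid y) = k$ (because only $2^k$ length-$k$ programs exist), this should let us rename the used right-hand vertices so that the vertex assigned to $x$ has length exactly $C(x \mid y)$. Any residual additive constant gets absorbed into the $O(1)$ overhead of \textsc{(i)} by adjusting the decoder $M$ to accept a prefix-free index of slightly variable length.
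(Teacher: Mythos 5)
Your overall route is the paper's: relativize the Corollary~\ref{cor: jason} construction by fixing $y$ and running the online matching against $U(\cdot,y)$, which gets you \textsc{(i)} and \textsc{(iii)} exactly as the paper does. The problem is your primary plan for \textsc{(ii)}. The Makhlin remark only gives $\size{R_k} < 2^{k+\delta}$, which still exceeds $2^k$ (and the same remark shows $\size{R_k}$ \emph{cannot} be pushed down to $2^k$ with polynomial degree), so there is no fixed injection of the right-hand vertices into $\{0,1\}^k$. If you instead rename only the \emph{used} vertices --- say in order of first use, exploiting that at most $2^k$ of them are ever matched --- then the name of $p$ becomes a function of the dovetailing order, and recovering it from $x$ requires simulating the unbounded enumeration; this destroys $C^{\poly(\size{x})}(p \mid x) = O(\log \size{x})$, since the whole point of \textsc{(iii)} is that $p$ sits inside an explicitly and efficiently computable neighbor list of $x$. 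So the renaming step either fails outright or sacrifices \textsc{(iii)}.

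The correct fix is the one you relegate to your last sentence, and it is what the paper does: the matched vertex $q$ lives in $R_k$ with $k = C(x \mid y)$ and $\size{R_k} < 2^{k+1}$, so $\size{q} = C(x\mid y) + 1$ exactly; let $p$ be $q$ with the last bit removed. Then $\size{p} = C(x \mid y)$, the dropped bit is supplied as part of the $O(1)$ advice in \textsc{(i)}, and $C^{\poly(\size{x})}(p \mid x) \leq C^{\poly(\size{x})}(q \mid x) + O(1) = O(\log \size{x})$ because truncation is trivially computable. With that substitution for your \textsc{(ii)} argument --- and dropping the Makhlin detour entirely --- your proof coincides with the paper's.
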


\begin{proof}
First, note that the proof of Corollary~\ref{cor: jason} can easily accommodate conditional complexity: if we dovetail on all programs $\pair{p,y}$ rather than on $p$ and perform computations with $\pair{p,y}$ instead of $p$, then the same construction efficiently computes a $\poly(\size{x})$-size list $f(x)$ containing a length $C(x \mid y) + O(1)$ string $q$ which satisfies $U(q,y) = x$.  Let $p$ be the string $q$ with the last $O(1)$ bits removed.  Then $\pair{p,y}$ together with $O(1)$ bits of advice suffice to reconstruct $x$, $\size{p} = C(x \mid y)$, and
\[
C^{\poly(\size{x})}(p \mid x) \leq C^{\poly(\size{x})}(q \mid x) + O(1) = O(\log \size{x})
\]
since $O(\log \size{x})$ bits are enough to distinguish among the members of $f(x)$.
\end{proof}

\section{How big is the polynomial?} \label{sec: analysis}
We estimate\footnote{Zimand~\cite{Zim13} achieves an improved degree bound of $6+\delta$.} the size of the polynomial-size list in Corollary~\ref{cor: jason}.  Our analysis involves some minor modifications to the construction.  Throughout the discussion below, $\delta$ denotes an arbitrary positive constant, and $n$ is shorthand for $\log  \size{L}$.

First we calculate the left-degree of the expander graph in our main construction, Lemma~\ref{lem: fullk7hall}.  In order to do this, we must first determine the left degree of the Ta-Shma, Umans, Zuckerman disperser in Theorem~\ref{thm: tuz}.  The authors of \cite{TUZ07} state the left degree as $\poly (n)$, but in fact it need not exceed $O(n^3)$.  This sharper bound follows by redoing the composition construction in \cite[Lemma 6.4]{TUZ07}, with the extractor in \cite[Lemma 4.21]{GUV09}.  In more detail, we import the parameters from \cite[Lemma 4.21]{GUV09} as follows:
\begin{align*}
t(n,k) &= \log n + O[\log k \cdot \log (k/\epsilon)], &\text{and \ }
\Delta = 2 \log (1/\epsilon) - O(1).
\end{align*}
The definitions of $E_1$ and $E_2$ in \cite[Lemma 6.4]{TUZ07} stay the same relative to these changes, as does the calculation of the disperser~$E$.  Feeding this new $E$ into \cite[Lemma 6.5]{TUZ07} yields a left degree of $O(n^{2+\delta})$ in Theorem~\ref{thm: tuz}.  This in turn, gives an $O(n^3)$ left degree for the graph in the Disperser~Lemma as the left degree may increase slightly in the case where $R$ has cardinality less than the right size.

By similar inspection, one obtains a bound of $O(n^4)$ for the left degree of the graph in the Expander~Lemma, however we can reduce this to $O(n^{2+\delta})$ by replacing that lemma with the following alternative expander construction.
\begin{theorem}[Guruswami, Umans, Vadhan \citep{GUV09}] \label{thm: guv09}
There exists a constant $c$ such that for every  $\alpha, \epsilon > 0$ and every $0 \leq k \leq n$, there is an explicit bipartite graph $(L,R,E)$ with
\begin{itemize}
\item $\size{L} = 2^n$,

\item left degree $d = c(nk/\epsilon)^{1+ 1/\alpha}$, and

\item  $\size{R} \leq d^2 \cdot 2^{k(1+\alpha)}$ 
\end{itemize} 
which is a $[2^k, (1-\epsilon)d]$-expander.
\end{theorem}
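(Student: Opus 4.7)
The plan is to follow the algebraic approach of Guruswami, Umans, and Vadhan, which realises the expander as an incidence graph built from Parvaresh--Vardy-style codes. Fix a prime power $q$, an integer $h$ with $q^h \geq 2^n$, and an irreducible polynomial $E(X) \in F_q[X]$ of degree $h$. Identify left vertices with univariate polynomials $f \in F_q[X]$ of degree less than $h$, so $\size{L} \geq 2^n$. Choose an exponent $s$ and a parameter $m$ to be fixed later, and define iterates $f_0 = f$ and $f_i \equiv f_{i-1}^s \pmod{E}$ for $1 \leq i < m$. The right-vertex set is $R = F_q^{m+1}$, and each $f \in L$ is connected to the $d = q$ tuples
\[
\bigl(y,\, f_0(y),\, f_1(y),\, \dots,\, f_{m-1}(y)\bigr), \qquad y \in F_q.
\]
The graph is explicit because each $f_i$ is computed by a short chain of modular exponentiations.

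The expansion analysis is the algebraic heart of the argument and follows the Parvaresh--Vardy list-decoding methodology. Suppose for contradiction that some $S \subseteq L$ with $\size{S} \leq 2^k$ has $\size{E(S)} < (1-\epsilon)\,d\,\size{S}$. An averaging step produces a ``candidate received word'' with which $\Omega(\size{S})$ of the polynomials $f \in S$ agree on at least a $(1-\epsilon)$-fraction of columns. The classical Parvaresh--Vardy list-decoder then furnishes a nonzero polynomial $Q(Y_0, \dots, Y_m)$ of controlled individual degrees: for each such $f$, the univariate substitution $y \mapsto Q(y, f_0(y), \dots, f_{m-1}(y))$ has strictly more roots than its degree in $y$ and therefore vanishes identically in $F_q[X]$. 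Reducing modulo $E$ and using the Parvaresh--Vardy identity $f_i \equiv f^{s^i} \pmod{E}$, this collapses to a univariate relation $\widehat Q(f) = 0$ of degree less than $s^m$ in the extension ring $F_q[X]/(E)$, so at most $s^m$ polynomials can satisfy it. Tuning parameters so that $s^m$ is strictly smaller than $\Omega(\size{S})$ closes the contradiction.

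Finally I would calibrate: set $m = \lceil 1/\alpha \rceil$, pick $s$ with $s^m \approx 2^{k(1+\alpha)}$, and choose $q$ a prime power of size $\Theta\bigl((nk/\epsilon)^{1+1/\alpha}\bigr)$ so that the interpolation-degree and code-length constraints are met simultaneously. Then $d = q = c\,(nk/\epsilon)^{1+1/\alpha}$ and $\size{R} = q^{m+1} \leq d^2 \cdot 2^{k(1+\alpha)}$, matching the stated bounds. The main obstacle is the algebraic core of the second step: one must verify that the polynomial $Q$ supplied by the list-decoder, after the Parvaresh--Vardy substitution, yields a \emph{nontrivial} $\widehat Q$ whose degree is strictly less than the size of the extension. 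This delicate point is where choosing $s$ as a power of the characteristic becomes essential, since the identity $f^s \equiv f_1 \pmod{E}$ depends on a Frobenius-like action; once that kernel is in place, the interpolation count and parameter tuning are routine bookkeeping.
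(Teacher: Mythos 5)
The paper does not prove this statement at all: Theorem~\ref{thm: guv09} is imported verbatim from Guruswami, Umans, and Vadhan and used as a black box in the degree analysis of Section~\ref{sec: analysis}, so there is no in-paper proof to compare against and you are reconstructing the source's argument. Your outline does identify the correct construction (the Parvaresh--Vardy incidence graph $f \mapsto (y, f_0(y), \dots, f_{m-1}(y))$ with $f_i = f^{s^i} \bmod E$) and the correct proof engine (interpolate a low-degree $Q$, reduce modulo $E$, count roots of the resulting univariate polynomial over $F_q[X]/(E)$). Two points in the middle step are off-shape, though. For the vertex-expansion statement no averaging step or ``$(1-\epsilon)$-fraction agreement'' is needed: one assumes $\size{E(S)} < (1-\epsilon)d\size{S}$ for some $S$ of size at most $2^k$, interpolates a nonzero $Q$ vanishing on \emph{all} of $T = E(S)$ (possible because $\size{T}$ is below the number of available monomials), and observes that \emph{every} $f \in S$ has all $q$ of its neighbors in $T$, so $R_f(Y) = Q(Y, f_0(Y), \dots, f_{m-1}(Y))$ vanishes at all $q$ points while having degree at most $\deg_Y Q + (s-1)(n-1)m < q$. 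Also, $s$ need not be a power of the characteristic: the identity $f_i \equiv f^{s^i} \pmod{E}$ holds by definition of the iterates, and the nontriviality of the collapsed polynomial $\widehat{Q}$ is arranged by first dividing $Q$ by the largest possible power of $E(Y)$, so that not all of its coefficient polynomials vanish modulo $E$; the monomials $\prod_i Y_i^{e_i}$ with $e_i < s$ then map to distinct powers of $Z$ by base-$s$ uniqueness.

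The genuine gap is the final calibration, which is internally inconsistent and would destroy the claimed degree bound. You set $m = \lceil 1/\alpha \rceil$ (a constant) and ask for $s^m \approx 2^{k(1+\alpha)}$, which forces $s \approx 2^{k\alpha(1+\alpha)}$; but the root-counting step requires $(s-1)(n-1)m < q = d$, so $d$ would then be exponential in $k$, contradicting $d = c(nk/\epsilon)^{1+1/\alpha}$ (and making the list in Section~\ref{sec: analysis} superpolynomial). The correct calibration is the reverse: take $s \approx (nk/\epsilon)^{1/\alpha}$ polynomial, let $m = \lceil k/\log_2 s\rceil$ \emph{grow} with $k$, and choose $q$ a prime power of order $s^{1+\alpha}$. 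Then $s^m \geq 2^k$ covers all sets of size at most $2^k$, the degree constraint $s n m = O(\epsilon q)$ holds because $m \leq k$, the left degree is $d = q = O((nk/\epsilon)^{1+1/\alpha})$, and $\size{R} = q^{m+1} = q^2 \cdot q^{m-1} \leq d^2 \cdot (s^{m-1})^{1+\alpha} \leq d^2 \cdot 2^{k(1+\alpha)}$ up to the constant-factor bookkeeping in \cite{GUV09}, using the minimality of $m$.
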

The Guruswami, Umans, and Vadhan expander graph achieves nearly perfect expansion into a space which is small enough to preserve the other parameters of Lemma~\ref{lem: fullk7hall}.  Since the neighbors of each left-hand vertex $x$ in Lemma~\ref{lem: fullk7hall} now derive from the expander graph of Theorem~\ref{thm: guv09} composed with the revised Disperser~Lemma graph from this section, we calculate the degree of $x$ to be $O(\size{x}^{5 + \delta})$.

At this point the cloning procedure in the Explicit Online Matching Theorem contributes a linear term to the $O(\size{x}^{5 + \delta})$, and the union over sets of size $k \leq \size{x}$ in Corollary~\ref{cor: jason} brings the total list size to $O(\size{x}^{7 + \delta})$.  This completes our proof sketch of the following result.

\begin{theorem}
There exists a constant $c \geq 1$ such that for every $\delta > 0$, there exists a nonnegative~$d$ and a polynomial-time computable function which maps each binary string~$x$ to a size $c \cdot \size{x}^{7+\delta}$ list containing a length $C(x) + d$ description for $x$.
\end{theorem}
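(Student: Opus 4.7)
My plan is to track the polynomial factors carefully through Lemma~\ref{lem: fullk7hall}, the Explicit Online Matching Theorem, and Corollary~\ref{cor: jason}, sharpening two ingredients along the way. The first sharpening is to reduce the left degree of Theorem~\ref{thm: tuz} from an unspecified $\poly(n)$ to $O(n^{2+\delta})$ by redoing the composition construction of Ta-Shma, Umans, and Zuckerman with the Guruswami, Umans, and Vadhan extractor (Lemma~4.21 of their paper) substituted for the original. Importing the replacement parameters $t(n,k) = \log n + O(\log k \cdot \log(k/\epsilon))$ and $\Delta = 2\log(1/\epsilon) - O(1)$ and recomputing the composition should produce the target degree; I would then check that the rest of the Disperser~Lemma argument — including the cloning/folding step, which can cost a small overhead when $R$ is smaller than the right size — propagates this to an $O(n^3)$ left degree in that lemma.

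The second sharpening is to replace the Expander~Lemma inside Lemma~\ref{lem: fullk7hall} by the expander of Theorem~\ref{thm: guv09}. I would choose $\alpha$ and $\epsilon$ so that the degree $d = c(nk/\epsilon)^{1+1/\alpha}$ lands at $O(n^{2+\delta})$ while the right-hand size $d^2 \cdot 2^{k(1+\alpha)}$ stays polynomial in $2^k$, keeping $\log \size{M} = O(n)$ after taking the disjoint union of middle vertex sets over $k \leq n \leq 2^k$. The $[2^k,(1-\epsilon)d]$-expansion of Theorem~\ref{thm: guv09} is strong enough for Lemma~\ref{lem: fullk7hall}, which only needs each left set of size $2^k$ to have at least $2^k$ neighbors downstream — and this already holds whenever $(1-\epsilon)d \geq 1$.

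Composing the new expander with the refined Disperser~Lemma inside Lemma~\ref{lem: fullk7hall} multiplies the degrees to give $O(n^{2+\delta}) \cdot O(n^3) = O(n^{5+\delta})$ for each length-$n$ left vertex. The Explicit Online Matching Theorem unions $k \leq n$ instances of Lemma~\ref{lem: fullk7hall}, contributing one linear factor and bringing the EOMT degree to $O(n^{6+\delta})$. In Corollary~\ref{cor: jason} the list for $x$ is itself a union of neighborhoods across all EOMT graphs with parameter $k \leq \size{x}$, which multiplies by one more linear factor for a final bound of $c \cdot \size{x}^{7+\delta}$; the constant $d$ in the statement absorbs the additive $O(1)$ already tracked by Corollary~\ref{cor: jason}.

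I expect the main obstacle to be parameter bookkeeping in the two composition steps: verifying that the Guruswami, Umans, and Vadhan extractor slots into the Ta-Shma, Umans, and Zuckerman composition without disturbing the disperser property, and tuning $\alpha$ and $\epsilon$ in Theorem~\ref{thm: guv09} so that both $d$ and $\size{R}$ sit in the required regimes simultaneously. Once these parameter choices are pinned down, the remaining accounting — a product across the expander/disperser composition followed by two linear unions — is routine and follows the layout already used in the proofs of Lemma~\ref{lem: fullk7hall}, the Explicit Online Matching Theorem, and Corollary~\ref{cor: jason}.
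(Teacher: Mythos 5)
Your proposal follows essentially the same route as the paper's own proof sketch: sharpening the Ta-Shma--Umans--Zuckerman left degree to $O(n^{2+\delta})$ via the Guruswami--Umans--Vadhan extractor, swapping the Expander~Lemma for Theorem~\ref{thm: guv09}, and then accounting for the two linear unions in the Explicit Online Matching Theorem and Corollary~\ref{cor: jason} to reach $O(\size{x}^{7+\delta})$. The parameter bookkeeping you flag as the remaining work is exactly what the paper also leaves at the level of a sketch, so there is nothing substantive to add.
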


The list size bound also refines the third part of Corollary~\ref{cor: bettermuchnik}.

\begin{theorem}
There exists a constant $c \geq 0$ such that for every $\delta >0$, there exists a nonnegative~$d$ such that for any strings $x$ and $y$, there exists a string $p$ such that
\begin{enumerate}[\scshape (i)]
\item $C(x \mid p,y) = d$,
\item $\size{p} = C (x \mid y)$, and 
\item $C^{\poly(\size{x})}(p \mid x) \leq (7+\delta)\cdot \log \size{x} + c$.
\end{enumerate}
\end{theorem}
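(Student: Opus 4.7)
The plan is to mirror the proof of Corollary~\ref{cor: bettermuchnik} but plug in the sharper list-size bound from the preceding theorem to make the additive $O(\log \size{x})$ term in part \textsc{(iii)} explicit. Recall that Corollary~\ref{cor: bettermuchnik} dovetails on all pairs $\pair{p,y}$ and runs the construction of Corollary~\ref{cor: jason} relative to $y$, producing a polynomial-time computable list $f(x)$ of size $\poly(\size{x})$ that contains a string $q$ with $U(q,y) = x$ and $\size{q} = C(x \mid y) + O(1)$. The $O(\log \size{x})$ in part \textsc{(iii)} of that corollary comes purely from writing down the index of $q$ within $f(x)$.

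First I would carry out the conditional variant of the construction exactly as in the proof of Corollary~\ref{cor: bettermuchnik}, but using the refined expander and disperser parameters from Section~\ref{sec: analysis}. The same degree-chasing done there for the unconditional version applies verbatim here: the expander of Theorem~\ref{thm: guv09} contributes $O(\size{x}^{5 + \delta})$ to the degree of each left-hand vertex in Lemma~\ref{lem: fullk7hall}, the cloning step in the Explicit Online Matching Theorem multiplies this by a linear factor in $\size{x}$, and the union over levels $k \leq \size{x}$ inside the proof of Corollary~\ref{cor: jason} adds another factor of $\size{x}$, yielding a list $f(x)$ of size at most $c' \cdot \size{x}^{7+\delta}$ for some absolute constant $c'$ independent of $\delta$.

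Next, set $p$ to be $q$ with its last $d_0$ bits deleted, where $d_0$ is the $O(1)$ from $\size{q} = C(x \mid y) + O(1)$, so that $\size{p} = C(x \mid y)$. Part \textsc{(ii)} is then immediate, and part \textsc{(i)} follows because $\pair{p,y}$ together with the $d_0$ trimmed bits and a fixed description of the reconstruction procedure (an absolute $O(1)$ that depends neither on $x$, $y$, nor $\delta$) recomputes $q$ and hence $x = U(q,y)$; call this total $d$. For part \textsc{(iii)}, note that given $x$ one can reproduce $f(x)$ in time $\poly(\size{x})$, so it suffices to supply the index of $q$ within $f(x)$, which requires at most $\log \size{f(x)} \leq (7+\delta)\log \size{x} + \log c'$ bits, plus an additive constant to account for the trimming of the last $d_0$ bits to obtain $p$. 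Choosing $c := \log c' + O(1)$ absorbs both additive pieces into a single constant independent of $\delta$.

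The one subtlety, and the main point requiring care, is the quantifier order: $c$ must be chosen \emph{before} $\delta$, while $d$ is allowed to depend on $\delta$. The $(7+\delta)\log \size{x}$ term is the only place where $\delta$ enters, via the exponent of the list-size bound from the preceding theorem; the constant $c'$ there is absolute, so taking $c = \log c' + O(1)$ works uniformly in $\delta$. In contrast, the reconstruction overhead $d$ for part \textsc{(i)} may in principle depend on $\delta$ through the particular machine that implements the indexing and trimming for the chosen parameter $\delta$, which matches the quantifier structure in the statement. No further obstacle arises, since both the efficient construction of $f(x)$ and the $\poly(\size{x})$ time bound for $C^{\poly(\size{x})}(p \mid x)$ are inherited directly from the proofs of Corollaries~\ref{cor: jason} and~\ref{cor: bettermuchnik}.
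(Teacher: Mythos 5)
Your proposal takes essentially the same route as the paper, which justifies this theorem in a single sentence: substitute the $c \cdot \size{x}^{7+\delta}$ list-size bound from the preceding theorem into the proof of Corollary~\ref{cor: bettermuchnik}, so that the $O(\log \size{x})$ cost of indexing into $f(x)$ becomes $(7+\delta)\log\size{x}+c$. Your elaboration of the degree accounting and the quantifier order fills in exactly the details the paper leaves implicit.
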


\begin{acknowledge}
 The author is grateful to Marius Zimand for his observation, the Disperser~Lemma, which improved the error term in Corollary~\ref{cor: jason} from $O[\log C(x)]$ to $O(1)$ and simplified the proof of the main construction, Lemma~\ref{lem: fullk7hall}.  Thanks also to the anonymous referee for pointing out that the Expander~Lemma follows from the disperser graph of Ta-Shma, Umans, and Zuckerman (Theorem~\ref{thm: tuz}), which means that we need not appeal to the Guruswami, Umans, and Vadhan expander graph (Theorem~\ref{thm: guv09}) in the main construction.
\end{acknowledge}

\end{document}